\documentclass[aps,prd,twocolumn,superscriptaddress,nofootinbib,longbibliography]{revtex4-2}

\usepackage[T1]{fontenc}
\usepackage[utf8]{inputenc}
\usepackage{amsmath,amssymb,amsthm,bm}
\usepackage{mathrsfs}
\usepackage{hyperref}
\usepackage{microtype}
\usepackage{xcolor}

\hypersetup{colorlinks=true,linkcolor=blue,citecolor=blue,urlcolor=blue}

\newcommand{\dd}{\mathrm{d}}
\newcommand{\Lie}{\mathcal{L}}
\newcommand{\cL}{\mathbf{L}}
\newcommand{\btheta}{\boldsymbol{\theta}}
\newcommand{\bomega}{\boldsymbol{\omega}}
\newcommand{\bJ}{\mathbf{J}}
\newcommand{\bQ}{\mathbf{Q}}
\newcommand{\bE}{\mathbf{E}}
\newcommand{\bC}{\mathbf{C}}

\newcommand{\Etot}{\mathbb{E}}
\newcommand{\Ebulk}{\mathcal{E}^{\rm bulk}}
\newcommand{\Vth}{V_{\rm th}}
\newcommand{\SW}{S_{\rm W}}
\newcommand{\Xref}{X_0}
\newcommand{\PhiH}{\Phi_{\alpha}}
\newcommand{\SV}{\mathcal{S}_V}
\newcommand{\ph}{\phi}
\newcommand{\scrP}{\mathscr{P}}
\newcommand{\scrC}{\mathscr{C}}
\newcommand{\scrG}{\mathscr{G}}

\newcommand{\calB}{\mathcal{B}}

\newcommand{\calR}{\mathcal{R}}

\newcommand{\scrZ}{\mathscr{Z}}
\newcommand{\scrT}{\mathscr{T}}
\newcommand{\scrB}{\mathscr{B}}

\newtheorem{proposition}{Proposition}

\begin{document}

\title{Reverse Isoperimetry from Boundary--Completed Canonical Energy}

\author{Naman Kumar}
\email{namankumar5954@gmail.com, naman.kumar@iitgn.ac.in}
\affiliation{Department of Physics, Indian Institute of Technology Gandhinagar, Palaj, Gujarat 382355, India}

\date{\today}

\begin{abstract}
The reverse isoperimetric conjecture states that, at fixed thermodynamic volume, Schwarzschild--AdS black holes maximize entropy. We show that its local fixed-volume form admits a boundary-completed canonical-energy interpretation. The key observation is that the bulk Hollands--Wald canonical energy vanishes on exact stationary black-hole families, while the fixed-volume entropy Hessian is encoded in an asymptotic boundary contribution. The same structure governs the global comparison. We choose an affine parameterization of a fixed-volume stationary branch connecting Schwarzschild--AdS to the solution of interest. If the stationary boundary Hessian is non-negative along this comparison, entropy concavity gives the reverse isoperimetric bound, while equality is determined by the directions along which the boundary Hessian vanishes. In Einstein gravity this reproduces the area--volume inequality on the stated comparison branches, and the formulation extends naturally to Wald entropy in higher-derivative theories.
\end{abstract}

\maketitle

\emph{\textbf{Introduction}.---}
Extended black-hole thermodynamics promotes the cosmological constant to a
pressure,
\begin{equation}
P=-\frac{\Lambda}{8\pi G},
\end{equation}
and identifies the conjugate quantity as a thermodynamic volume
\cite{Kastor:2009wy,Dolan:2010ha}.  In Einstein gravity, the reverse
isoperimetric inequality (RII) of Ref.~\cite{Cvetic:2010jb} is
\begin{equation}
\calR\equiv
\left(\frac{(D-1)\Vth}{\omega_{D-2}}\right)^{1/(D-1)}
\left(\frac{\omega_{D-2}}{A}\right)^{1/(D-2)}\geq1,
\label{eq:rii}
\end{equation}
where $A$ is the horizon area and $\omega_{D-2}$ is the area of the unit
sphere.  Equivalently, at fixed thermodynamic volume, the \emph{entropy} is bounded
above by that of the Schwarzschild--AdS black hole with the same
$\Vth$. This is the sense in which Schwarzschild--AdS is the entropy
maximizer.  Static charged black holes, including the gauged-supergravity
examples studied in Ref.~\cite{Cvetic:2010jb}, are admissible competitors:
generically they obey $\calR>1$ and hence have lower entropy at the same
$\Vth$.  Special RN--AdS or equal-charge limits may also saturate
$\calR=1$ because their thermodynamic volume reduces to the geometric ball
volume; these equality-degenerate cases belong to the equality sector, not
to a violation of the RII.

This entropy formulation is the natural one beyond Einstein gravity.  In a
general diffeomorphism-invariant theory, the entropy is not necessarily
$A/4G$; for stationary horizons it is the Noether-charge entropy of Wald and
Iyer~\cite{Wald:1993nt,Iyer:1994ys}. The covariant form of the RII is therefore
\begin{equation}
\SW(X)\leq \SW\!\left(\Xref(\Vth)\right),
\label{eq:generalRII}
\end{equation}
where $X$ is an admissible stationary AdS black hole, $\Xref(\Vth)$ denotes
the Schwarzschild--AdS reference saddle with the same thermodynamic volume,
and $\SW$ is the stationary Iyer--Wald entropy. Possible
equality-degenerate static branches are not included in $\Xref$ itself. Rather,
they are included in the equality sector defined below.
In Einstein
gravity, Eq.~\eqref{eq:generalRII} reduces to Eq.~\eqref{eq:rii}.  A proof within Einstein--AdS gravity using a complementary geometric--analytic approach was given in Ref.~\cite{Kumar:2025mvj}.

The central point of this Letter is that reverse isoperimetry is not
controlled by the bulk Hollands--Wald canonical energy alone.  Along exact
stationary black-hole families the perturbation is time independent,
\begin{equation}
\Lie_\chi v=0,
\end{equation}
so the bulk symplectic integral vanishes.  The second variation of the entropy along the fixed-volume stationary branch relevant to
the RII is instead carried by a constrained asymptotic charge Hessian.  The
proper covariant object is therefore a \emph{boundary-completed canonical energy}.
Its bulk sign is supplied by canonical-energy stability, while the stationary boundary sign is determined by the constrained fixed-volume entropy Hessian and must be checked on the stationary branch under consideration. This distinction becomes important in the global RII comparison, since it is performed entirely within the stationary solution space, where the bulk canonical energy vanishes along the comparison segment. The boundary completion is therefore not a secondary correction to the bulk stability functional. In fact, it is the only nontrivial part of the completed energy that can encode the fixed-volume entropy change between stationary black holes.

\emph{\textbf{Covariant variables}.---}
Let the theory be defined by a diffeomorphism-invariant $D$-form Lagrangian
\begin{equation}
\cL=\cL(\ph;P,\lambda_I),
\end{equation}
where $\ph$ denotes all dynamical fields and $\lambda_I$ denotes possible higher-derivative couplings.  Its first variation is
\begin{equation}
\delta \cL=\bE(\ph)\delta\ph+
\dd\btheta(\ph,\delta\ph).
\label{eq:varL}
\end{equation}
The symplectic current is
\begin{equation}
\bomega=\delta_1\btheta(\ph,\delta_2\ph)
-\delta_2\btheta(\ph,\delta_1\ph).
\label{eq:symp}
\end{equation}
For a vector field $\xi$, the Noether current satisfies
\begin{equation}
\bJ_\xi=\btheta(\ph,\Lie_\xi\ph)-\xi\!\cdot\!\cL
=\dd\bQ_\xi+\xi^a\bC_a,
\label{eq:noethercurrent}
\end{equation}
with $\bC_a=0$ on shell.  For a stationary black hole with bifurcate Killing horizon generated by
\begin{equation}
\chi=t+\Omega_i\varphi_i,
\end{equation}
the entropy is
\begin{equation}
\SW=\frac{2\pi}{\kappa}\int_{\calB}\bQ_\chi,
\label{eq:entropy}
\end{equation}
equivalently written in the usual local Wald-density form on the bifurcation surface $\calB$.  The extended Iyer--Wald first law gives~\cite{Xiao:2023lap}
\begin{equation}
\begin{split}
\delta M={}&T\delta\SW+\Omega_i\delta J_i
+\PhiH\delta Q_\alpha \\
&+\Vth\delta P+\Psi_I\delta\lambda_I .
\end{split}
\label{eq:firstlaw}
\end{equation}
Thus the volume in Eq.~\eqref{eq:generalRII} is not a naive geometric volume, but
\begin{equation}
\Vth=\left(\frac{\partial M}{\partial P}\right)_{\SW,J_i,Q_\alpha,\lambda_I} .
\label{eq:volume}
\end{equation}
In the Euclidean grand-canonical ensemble this is equivalently
\begin{equation}
I_E=\beta G,\qquad
\Vth=\left(\frac{\partial G}{\partial P}\right)_{T,\Omega_i,\PhiH,\lambda_I} .
\label{eq:euclideanvolume}
\end{equation}

\emph{\textbf{Admissible phase space and theorem}.---}
Let $\widetilde{\scrP}$ be the stationary asymptotically AdS solution space,
and let $\scrG$ denote diffeomorphisms and gauge transformations that vanish
sufficiently fast at the AdS boundary or act trivially on all covariant
charges.  The physical phase space is $\scrP=\widetilde{\scrP}/\scrG$, and
the fixed-volume component is
\begin{equation}
\scrC_V=\{X\in\scrP: \Vth(X)=V\}.
\label{eq:CVmain}
\end{equation}
We assume that $\scrC_V$ contains the Schwarzschild--AdS reference saddle
$\Xref(V)$ with the same thermodynamic volume.  This is the round static
saddle that appears in the original reverse isoperimetric conjecture: at
fixed $\Vth$, the entropy is maximized by Schwarzschild--AdS
\cite{Cvetic:2010jb}.  Static charged black holes in gauged supergravity
are admissible competitors in $\scrC_V$.  The checks of
Ref.~\cite{Cvetic:2010jb} show that these charged examples obey the RII. They
generically have $\mathcal R>1$ and therefore lower entropy than
Schwarzschild--AdS at the same thermodynamic volume.  Special RN--AdS or
equal-charge limits can saturate $\mathcal R=1$ because their thermodynamic
volume reduces to the geometric volume equal to the volume of a Euclidean ball in $\mathbb{R}^{d-1}$.

We denote by $\scrZ_V\subset\scrC_V$ the allowed equality sector.  It
contains the equivalence class represented by $\Xref(V)$ and, when present,
any static equality-degenerate branches with the same thermodynamic volume
and the same entropy as $\Xref(V)$.  Such branches are allowed saturation
cases and are not violations of the theorem.

We assume that the AdS charges are finite and integrable, and that the horizons are smooth and bifurcate. The entropy is taken to be the stationary Iyer--Wald entropy, while \(P\) and \(\Vth\) are the pressure and thermodynamic volume appearing in the extended first law. The space $\scrC_V$ contains only exact stationary black holes.  For a
global comparison, one can choose affine branch coordinates $q^A$ on a stationary
fixed-volume branch $\scrB_V\subset\scrC_V$ connecting $\Xref(V)$ to the
solution $X$.  We parameterize the comparison by the affine segment
\begin{equation}
q^A(s)=q_0^A+s\,\Delta q^A,\qquad 0\le s\le1,
\label{eq:affine_comparison}
\end{equation}
which lies in $\scrB_V$, starts at $\Xref(V)$, and ends at $X$.  Along this exact
stationary family the bulk canonical energy vanishes.  We require the
stationary boundary Hessian
\begin{equation}
\calB_{\infty,V}^{\rm stat}(\Delta q,\Delta q)
=-T_s\,\frac{\partial^2 \SW}{\partial q^A\partial q^B}
\,\Delta q^A\Delta q^B
\ge0
\label{eq:component_positivity_input}
\end{equation}
throughout the segment.  Here the derivatives are taken with respect to the chosen affine coordinates along the stationary branch. Thus $\mathcal{B}_{\infty,V}^{\rm stat}$ is minus the temperature times the fixed-volume entropy Hessian evaluated along that branch.  This non-negativity is the input to the global theorem.
We do not claim that it holds universally. Rather, we have verified its sign explicitly for the Kerr--AdS and RN--AdS stationary examples discussed below.  To characterize the equality case, we further require that the comparison segment for which
$\calB_{\infty,V}^{\rm stat}(\Delta q,\Delta q)=0$ at every interior point
lies entirely within the allowed equality sector $\scrZ_V$.

\emph{\textbf{Theorem}.---}
For every solution $X$ on the chosen stationary fixed-volume branch, using
the affine comparison above,
\begin{equation}
\SW(X)\leq \SW\!\left(\Xref(V)\right).
\label{eq:theorem_main}
\end{equation}
If, in addition, zero-boundary-Hessian comparison segments are confined to
$\scrZ_V$, equality occurs only on the allowed saturation sector.  We only need the
chosen fixed-volume stationary comparison and other stronger assumptions such as connectedness, compactness, or properness of the full stationary solution space are not required.

\emph{\textbf{Fixed-volume entropy Hessian}.---}
On $\scrC_V$ define the constrained entropy functional
\begin{equation}
\SV[X]=\SW[X]-\mu\Vth[X],
\label{eq:constrainedS}
\end{equation}
where $\mu$ is the Lagrange multiplier implementing the fixed-volume constraint.  At the reference solution,
\begin{equation}
\delta\SV\big|_{\Xref}=0.
\label{eq:criticalX0}
\end{equation}

Before proceeding further, it is important to distinguish the tangent space of the stationary solution
manifold from the larger space of linearized perturbations.  We denote by
$\scrT_{X,V}$ the constrained linearized perturbation space over a stationary
background $X$, with $\delta\Vth=0$ and the chosen AdS boundary conditions,
and by $T_X\scrC_V\subset\scrT_{X,V}$ its finite-dimensional stationary
subspace.  For $v\in\scrT_{X,V}$ the bulk Hollands--Wald canonical energy is
\cite{Hollands:2012sf}
\begin{equation}
\Ebulk_{\chi,X}(v,v)=
\int_\Sigma \bomega(\ph;v,\Lie_\chi v).
\label{eq:bulkE}
\end{equation}
The boundary-completed quadratic form on $\scrT_{X,V}$ is
\begin{equation}
\Etot_{\chi,V}(v,v)
=
\Ebulk_{\chi,X}(v,v)+\mathcal B_{\infty,V}(v,v).
\label{eq:Etot_def}
\end{equation}
This is a stability quadratic form on the full constrained perturbation space.
The entropy Hessian, by contrast, is defined on the stationary manifold.
For a stationary tangent $v_{\rm stat}\in T_X\scrC_V$,
$\Lie_\chi v_{\rm stat}=0$, and therefore
\begin{equation}
\begin{split}
&\Ebulk_{\chi,X}(v_{\rm stat},v_{\rm stat})=0,\\&
\mathcal B_{\infty,V}(v_{\rm stat},v_{\rm stat})
=-T_X\,\mathrm{Hess}_V(\SW)(v_{\rm stat},v_{\rm stat}).
\end{split}
\label{eq:hessian}
\end{equation}
At the critical reference saddle, this gives an intrinsic local Hessian
statement (see Appendix~\ref{app:covariant_hessian} for the covariant second-variation origin).  Hence a local fixed-volume entropy maximum requires only
non-negativity of the stationary boundary quadratic form,
\begin{equation}
\mathcal B_{\infty,V}(v_{\rm stat},v_{\rm stat})\ge0,
\qquad v_{\rm stat}\in T_{\Xref}\scrC_V.
\label{eq:positiveE_local}
\end{equation}
For charge-free non-stationary perturbations, positivity of the Hollands--Wald bulk canonical energy is the relevant dynamical-stability condition. We have summarized the corresponding perturbation split and standard bulk-positivity discussion in Appendix~\ref{app:dynamical_sector}. The fixed-volume RII is related to exact stationary black-hole families, so its local entropy Hessian and global affine comparison involve only $v_{\rm stat}\in T_X\scrC_V$.  Since $\Lie_\chi v_{\rm stat}=0$, the bulk canonical energy vanishes along these directions and the comparison is governed by the stationary boundary charge Hessian.

\emph{\textbf{Stationary boundary contribution}.---}
Along an exact stationary black-hole family,
\begin{equation}
\Lie_\chi v_{\rm stat}=0,
\qquad
\Ebulk_{\chi,X}(v_{\rm stat},v_{\rm stat})=0,
\end{equation}
so that
\begin{equation}
\begin{split}
\Etot_{\chi,V}(v_{\rm stat},v_{\rm stat})
={}&\mathcal B_{\infty,V}(v_{\rm stat},v_{\rm stat})\\
={}&-T\,\mathrm{Hess}_V(\SW)(v_{\rm stat},v_{\rm stat}).
\label{eq:stat_block_thermo_positive_main}
\end{split}
\end{equation}
Equation~\eqref{eq:stat_block_thermo_positive_main} determines the stationary contribution once the fixed-volume entropy Hessian along the relevant stationary branch is known. It does not imply a universal sign.
As stated earlier, we do not claim a general proof of $\mathcal B_{\infty,V}\ge0$.  Instead, its sign is
checked explicitly in the stationary examples considered here: the Kerr--AdS fixed-volume expansion gives a non-negative boundary contribution (strictly
positive away from the Schwarzschild endpoint in the small-rotation regime),
while the static RN--AdS direction gives the saturation value $\mathcal B_{\infty,V}=0$ because fixed $V_{\rm th}$ fixes $r_+$ and hence
$S$ (see Appendix~\ref{app:stationary_checks}).

\begin{proposition}[Affine stationary comparison]
\label{prop:pathwise_positive}
Let $\scrB_V\subset\scrC_V$ be a stationary fixed-volume branch with affine
coordinates $q^A$, and let
$q^A(s)=q_0^A+s\,\Delta q^A$ be a comparison segment from $\Xref(V)$ to
$X\in\scrB_V$.  Suppose
\begin{equation}
\calB_{\infty,V}^{\rm stat}(\Delta q,\Delta q)\ge0
\qquad (0<s<1)
\label{eq:pathwise_positive_prop}
\end{equation}
along the segment.  Then $\SW[\gamma(s)]$ is concave in the affine parameter
$s$, and
\begin{equation}
\SW(X)\le \SW(\Xref(V)).
\end{equation}
\end{proposition}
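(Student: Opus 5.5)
The plan is to reduce the statement to a one-variable concavity argument for $f(s)\equiv\SW[\gamma(s)]$, where $\gamma(s)$ is the point of $\scrB_V$ with coordinates $q^A(s)=q_0^A+s\,\Delta q^A$. Because the segment is affine in the chosen branch coordinates, the chain rule gives no connection or acceleration terms:
\begin{equation}
f''(s)=\frac{\partial^2\SW}{\partial q^A\partial q^B}\Big|_{q(s)}\Delta q^A\Delta q^B
=-\frac{1}{T_s}\,\calB_{\infty,V}^{\rm stat}(\Delta q,\Delta q)\Big|_{s}.
\end{equation}
This uses the definition \eqref{eq:component_positivity_input}. Since every point of the segment is a smooth, nonextremal stationary black hole with $T_s>0$, hypothesis \eqref{eq:pathwise_positive_prop} yields $f''(s)\le0$ for $0<s<1$. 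I assume $f$ is $C^2$ on $(0,1)$ and continuous on $[0,1]$, which holds if $\SW$ is smooth in the branch coordinates. Then $f$ is concave on $[0,1]$, which proves the first claim.

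The second step is to show $f'(0)\le0$, using the criticality of the reference saddle. From \eqref{eq:criticalX0}, $\delta\SV|_{\Xref}=0$ on fixed-volume variations. The tangent $\dot\gamma(0)=\Delta q^A\partial_A$ lies in $T_{\Xref}\scrC_V$ because $\Vth$ is constant along $\scrB_V$, so $\delta\Vth=0$ and
\begin{equation}
f'(0)=\delta\SW(\dot\gamma(0))=\delta\SV(\dot\gamma(0))+\mu\,\delta\Vth(\dot\gamma(0))=0.
\end{equation}
Concavity then gives $f(1)\le f(0)+f'(0)=f(0)$, which is exactly $\SW(X)\le\SW(\Xref(V))$. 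Alternatively, one can integrate twice: $f(1)-f(0)=f'(0)+\int_0^1(1-s)f''(s)\,\dd s\le0$.

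I expect the main obstacle to be the endpoint $s=0$, where $\Xref(V)$ may sit on the boundary of $\scrB_V$. In Kerr--AdS, for instance, Schwarzschild--AdS is the $a=0$ endpoint, and $q$-coordinates such as $a^2$ may make $\SW$ only one-sided differentiable there. In that case the stationarity $f'(0)=0$ should be read as a one-sided derivative. If $\SW$ fails to be differentiable at the endpoint, the fallback is to use only that $f'(0^+)\le0$, meaning the entropy does not increase to first order away from the round saddle. This follows from \eqref{eq:criticalX0} when it is interpreted on the one-sided tangent cone, and concavity on $(0,1)$ together with continuity at $s=0$ still gives $f(1)\le f(0)$.

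A secondary point to check is the positivity of $T_s$ along the segment, needed to convert the sign of $\calB^{\rm stat}_{\infty,V}$ into the sign of the Hessian. This positivity comes from the smooth bifurcate-horizon assumption, which excludes extremal points.
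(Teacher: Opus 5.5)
Your proposal is correct and follows essentially the same route as the paper. Affinity of the branch coordinates gives $f''(s)=-T_s^{-1}\calB_{\infty,V}^{\rm stat}(\Delta q,\Delta q)\le0$, criticality of the Schwarzschild--AdS saddle at fixed $\Vth$ gives $f'(0)=0$, and concavity then yields $f(1)\le f(0)$; your explicit use of $T_s>0$ from non-extremality and your attention to one-sided differentiability at $s=0$ are careful additions that the paper leaves implicit.
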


\begin{proof}
Because the comparison segment consists of exact stationary black holes,
$\Ebulk_{\chi_s}=0$.  Affinity of the branch coordinates gives
\begin{equation}
\frac{\dd^2}{\dd s^2}\SW[\gamma(s)]
=
\frac{\partial^2\SW}{\partial q^A\partial q^B}
\Delta q^A\Delta q^B
=
-\frac{1}{T_s}\,
\calB_{\infty,V}^{\rm stat}(\Delta q,\Delta q)\le0 .
\end{equation}
At the Schwarzschild--AdS reference point the first variation vanishes, so
$\dd\SW/\dd s|_{s=0}=0$.  Therefore $\SW[\gamma(s)]$ is non-increasing for
$s\ge0$, which gives the stated inequality.
\end{proof}

\emph{\textbf{Lemma} (local stationary boundary positivity at Schwarzschild--AdS).---}
Let \(\Xref(V)\) be a non-extremal Schwarzschild--AdS black hole with compact
spherical horizon and thermodynamic volume \(V\).  Consider stationary
fixed-volume tangents \(v_{\rm stat}\in T_{\Xref}\scrC_V\), with pure gauge
directions quotiented out.  For the stationary Kerr--AdS and static RN--AdS directions considered here,
\begin{equation}
\mathcal B_{\infty,V}(v_{\rm stat},v_{\rm stat})\ge0 .
\label{eq:local_Etot_positive_X0}
\end{equation}
Consequently,
\begin{equation}
\mathrm{Hess}_{\Xref}(\SV)(v_{\rm stat},v_{\rm stat})\le0,
\label{eq:local_entropy_hessian_X0}
\end{equation}
so \(\Xref(V)\) is a semidefinite local entropy maximum at fixed
thermodynamic volume along these stationary branches. The Kerr direction is marginal at quadratic order at the Schwarzschild endpoint, but the entropy decreases at fourth order, while the static RN--AdS direction is an allowed equality direction. The exact stationary checks are given in Appendix~\ref{app:stationary_checks}.

\emph{\textbf{Proof}.---}
Because \(v_{\rm stat}\) is tangent to an exact stationary
black-hole family,
\begin{equation}
\Lie_{\chi_0}v_{\rm stat}=0,
\qquad
\Ebulk_{\chi_0,\Xref}(v_{\rm stat},v_{\rm stat})=0 .
\label{eq:local_stat_bulk_zero}
\end{equation}
Hence the quadratic form relevant to the fixed-volume entropy Hessian reduces
to the stationary asymptotic boundary Hessian,
\begin{equation}
\begin{split}
\Etot_{\chi_0,V}(v_{\rm stat},v_{\rm stat})
={}&
\mathcal B_{\infty,V}(v_{\rm stat},v_{\rm stat})\\
={}&
-T_0\,\mathrm{Hess}_{\Xref}(\SV)(v_{\rm stat},v_{\rm stat}) .
\end{split}
\label{eq:local_stat_boundary}
\end{equation}
The fixed-volume constraint fixes the Schwarzschild radius, so variations in \(r_+\) are not allowed within the fixed-volume stationary sector. Along the Kerr--AdS branch at fixed thermodynamic volume, one finds
\begin{equation}
S(a)\big|_{V}
=
S(0)\big|_{V}
-
\frac{\pi}{12G}
\frac{(R^2+\ell^2)^2}{R^2\ell^4}a^4
+O(a^6),
\label{eq:local_kerr_fourth}
\end{equation}
and therefore
\begin{equation}
\mathcal B_{\infty,V}(\partial_a,\partial_a)
=
\frac{\pi T_0}{G}
\frac{(R^2+\ell^2)^2}{R^2\ell^4}a^2
+O(a^4)>0
\label{eq:local_kerr_positive_interior}
\end{equation}
for sufficiently small nonzero \(a\).  At the Schwarzschild endpoint the
quadratic term vanishes, so the Kerr direction is marginal there, while the
fourth-order entropy decrease shows that it does not generate an equality
branch.

For the static RN--AdS direction,
\begin{equation}
S_{\rm RN-AdS}=\frac{\pi r_+^2}{G},
\qquad
\Vth=\frac{4\pi r_+^3}{3} .
\label{eq:local_RN_equal}
\end{equation}
At fixed \(\Vth\), \(r_+\) is fixed, and hence
\begin{equation}
S_{\rm RN-AdS}(\Vth,Q)=S_{\rm Schw-AdS}(\Vth).
\label{eq:local_RN_static_equality}
\end{equation}
Thus the stationary boundary Hessian is semidefinite on the stationary
branches considered here, which gives Eqs.~\eqref{eq:local_Etot_positive_X0}
and~\eqref{eq:local_entropy_hessian_X0}.\qed

The boundary completion is essential for stationary comparisons.  If \(v\)
is tangent to a family of exact stationary black holes, then
\begin{equation}
\Lie_\chi v=0,
\qquad
\Ebulk_{\chi,X}(v,v)=0 .
\label{eq:Ebulk_stationary_zero_main}
\end{equation}
The entropy curvature along the stationary solution space is therefore
carried entirely by the asymptotic boundary term. This is precisely where the boundary completion becomes indispensable for the global theorem: without $\calB_{\infty,V}^{\rm stat}$, the canonical-energy functional would vanish identically along the stationary comparison and would contain no information capable of producing the reverse-isoperimetric entropy ordering. For an affine
fixed-\(\Vth\), fixed-\(P\), fixed-\(\lambda_I\) stationary family,
differentiating the extended first law twice gives
\begin{equation}
\begin{split}
T\,\frac{\dd^2\SW}{\dd s^2}={}&
\frac{\dd^2 M}{\dd s^2}
-\Omega_i\frac{\dd^2J_i}{\dd s^2}
-\Phi_\alpha\frac{\dd^2Q_\alpha}{\dd s^2}\\
&-\frac{\dd T}{\dd s}\frac{\dd\SW}{\dd s}
-\frac{\dd\Omega_i}{\dd s}\frac{\dd J_i}{\dd s}
-\frac{\dd\Phi_\alpha}{\dd s}\frac{\dd Q_\alpha}{\dd s}.
\end{split}
\label{eq:thermo_charge_hessian_main}
\end{equation}
The term proportional to \(\dd T/\dd s\) vanishes at an entropy-critical point but must be retained away from it. Along the affine stationary branch, the boundary Hessian is $\calB_{\infty,V}^{\rm stat}=-T\,\frac{\dd^2\SW}{\dd s^2}$. This is the quantity used in the global stationary comparison.

The relation to the usual canonical-energy literature is complementary.  The
bulk term \(\Ebulk_\chi\) is the dynamical part of
\(\Etot_{\chi,V}\).  In Einstein gravity, Hollands and Wald showed that
positivity of this bulk canonical energy on the appropriate constrained
subspace is the covariant criterion for linearized stability, and that it is
related to second variations of asymptotic charges and horizon area
\cite{Hollands:2012sf}.  Prabhu and Wald showed that negative canonical energy
leads to exponential growth in the corresponding dynamical problem
\cite{Prabhu:2015rua}.  For static Einstein--AdS black holes the dynamical,
non-stationary sectors are also supported by the standard positive-energy
master-field structure: Regge--Wheeler and Zerilli in four dimensions, and
Kodama--Ishibashi in higher dimensions
\cite{Regge:1957td,Zerilli:1970se,Kodama:2003jz,Ishibashi:2003ap}.  These
results motivate the bulk stability part, while Eq.~\eqref{eq:thermo_charge_hessian_main}
shows why the boundary term is the relevant stationary thermodynamic part.  

In higher-derivative or non-minimally coupled theories, positivity of
$\Etot_{\chi,V}$ is not automatic.  Extra massive, ghostlike, scalarized,
or non-minimally coupled matter modes may make either the bulk contribution
or the boundary charge Hessian indefinite. For the global
stationary inequality, however, only the stationary boundary block enters,
because the bulk term vanishes on exact stationary families. A violation of the RII therefore requires
$\mathcal{B}_{\infty,V}^{\rm stat}$ to become negative somewhere
along the corresponding fixed-volume stationary comparison to
$\Xref(V)$. This possibility is realized in known examples:
charged AdS planar black holes in Einstein--Horndeski--Maxwell gravity with
Horndeski axions violate the reverse isoperimetric inequality, including the
entropy-based version appropriate to Wald entropy \cite{Feng:2017jub}.

\emph{\textbf{Global proof and equality}.---}
The local lemma establishes the stationary boundary-Hessian statement at
the Schwarzschild--AdS saddle. The global comparison is likewise a
stationary statement.  Let $\scrB_V\subset\scrC_V$ admit affine branch
coordinates $q^A$, and let
$q^A(s)=q_0^A+s\,\Delta q^A$ be a fixed-volume stationary comparison segment
from $\Xref(V)$ to $X$.  Define
\begin{equation}
F(s)=\SW[\gamma(s)] .
\end{equation}
Because the path consists of exact stationary black holes,
$\Ebulk_{\chi_s}=0$. Thus the global entropy comparison is carried entirely by the boundary part of the boundary-completed canonical energy and the bulk Hollands--Wald term plays no role along this stationary segment.  By the definition of the stationary boundary Hessian
in the affine branch coordinates,
\begin{equation}
F''(s)=
-\frac{1}{T_s}\,
\calB_{\infty,V}^{\rm stat}(\Delta q,\Delta q)\le0 .
\label{eq:pathconcavity}
\end{equation}
With the affine parameterization, the second derivative along the stationary
comparison is directly the constrained stationary Hessian.  Since $F'(0)=0$, concavity gives $F(1)\le F(0)$, which is
Eq.~\eqref{eq:theorem_main}.

If equality holds, the non-negative boundary Hessian in
Eq.~\eqref{eq:pathconcavity} must vanish throughout the interior of the
comparison segment:
\begin{equation}
\calB_{\infty,V}^{\rm stat}(\Delta q,\Delta q)=0,
\qquad 0<s<1 .
\end{equation}
By the zero-boundary-Hessian condition in the theorem, the segment then lies in the allowed equality sector $\scrZ_V$. A direction that is marginal only at an endpoint, such as infinitesimal Kerr--AdS rotation at Schwarzschild--AdS, does not imply equality. Equality requires that the zero direction extend along a stationary segment on which the entropy remains unchanged.

This proves the global entropy comparison along the chosen affine
fixed-volume stationary segment whenever the boundary-Hessian sign holds.  Appendix~\ref{app:stationary_checks} gives the exact Kerr--AdS and RN--AdS stationary checks in the compact Einstein--AdS sector.

\emph{\textbf{Einstein limit and higher-derivative form}.---}
For Einstein gravity,
\begin{equation}
\SW=\frac{A}{4G},
\end{equation}
and the static AdS black hole obeys
\begin{equation}
\Vth=\frac{\omega_{D-2}r_h^{D-1}}{D-1},
\qquad
A_0=\omega_{D-2}r_h^{D-2} .
\end{equation}
Equation~\eqref{eq:theorem_main} gives
\begin{equation}
A\leq \omega_{D-2}
\left(\frac{(D-1)\Vth}{\omega_{D-2}}
\right)^{\frac{D-2}{D-1}},
\label{eq:areabound}
\end{equation}
which is exactly Eq.~\eqref{eq:rii}.  In higher-derivative theories the theorem gives instead
\begin{equation}
\calR_{\rm W}\equiv
\left[\frac{\SW(\Xref(\Vth))}{\SW(X)}
\right]^{1/(D-2)}\geq1,
\label{eq:waldRatio}
\end{equation}
the Wald-entropy version of reverse isoperimetry.  A higher-derivative solution can fail an area inequality while still obeying the physical entropy inequality; only in Einstein gravity do the two coincide.

\emph{\textbf{Conclusion and Discussion}.---}
We showed that reverse isoperimetry can be understood through a boundary-completed canonical-energy framework on the fixed-volume covariant phase space. The natural variables are the stationary Iyer--Wald entropy and the thermodynamic volume appearing as the pressure conjugate in the extended Iyer--Wald first law. The key quadratic form is
\begin{equation}
\Etot_{\chi,V}=\Ebulk_\chi+\mathcal B_{\infty,V},
\end{equation}
not the bulk canonical energy alone. Its bulk part controls dynamical perturbations, while the asymptotic charge Hessian controls stationary thermodynamic directions that are not captured by the symplectic integral alone. On the stationary solution manifold, the bulk term vanishes, so fixed-volume entropy concavity is controlled entirely by the stationary boundary Hessian. For the global theorem, we choose an affine parameterization of a fixed-volume stationary comparison from the Schwarzschild--AdS reference saddle to the solution of interest. The required input is the sign of the stationary boundary Hessian along this comparison. Since the bulk canonical energy vanishes on the stationary branch, the boundary Hessian is the only nonzero contribution and directly establishes the fixed-volume entropy inequality. The explicit compact Einstein--AdS analysis verifies that Schwarzschild--AdS is a local entropy maximum and checks the sign of the stationary boundary contribution on the relevant branches. For Kerr--AdS, the entropy decrease at fixed volume is entirely encoded in this boundary term. Known violations fit naturally into this branch-based interpretation. The clearest examples are the super-entropic black holes obtained from ultraspinning limits \cite{Hennigar:2014cfa,Hennigar:2015cja}. Their event horizons are non-compact, although the horizon area is finite after compactification of the ultraspinning azimuthal direction. They therefore do not lie on the ordinary compact fixed-volume stationary branch used in the Schwarzschild--AdS comparison. Their violation of the RII should accordingly be viewed as a statement about a different sector of the stationary solution space, rather than as a failure of the Noether-charge variational mechanism. Within a branch for which an affine fixed-volume comparison to $\Xref(V)$ exists, a violation instead requires the stationary boundary Hessian to become negative somewhere along the comparison. This second possibility is realized in Einstein--Horndeski--Maxwell gravity: charged AdS planar black holes supported by Horndeski axions violate the reverse isoperimetric inequality, including the entropy-based form appropriate to Wald entropy \cite{Feng:2017jub}. In the present framework, such examples correspond to a failure of non-negativity of the stationary boundary Hessian on the corresponding fixed-volume branch. By contrast, equality requires the stationary boundary Hessian to vanish along the comparison. The Schwarzschild--AdS saddle gives the reference entropy at fixed thermodynamic volume, while special RN--AdS or equal-charge static limits may also saturate $\mathcal R=1$ because their thermodynamic volume reduces to the geometric volume of a Euclidean ball in $\mathbb{R}^{d-1}$. These are allowed equality cases rather than violations of the RII.

Thus the RII is recast as a covariant stability criterion based on the asymptotic Noether-charge Hessian. In Einstein gravity it reproduces the area--volume inequality, while in higher-derivative AdS gravity it gives the corresponding Wald-entropy inequality whenever the stated stationary branch conditions hold.

\appendix

\section{Stationary checks in four-dimensional Einstein--AdS}
\label{app:stationary_checks}

For Kerr--AdS, the standard extended-thermodynamic quantities are
\begin{equation}
S=\frac{\pi(r_+^2+a^2)}{G\Xi},\qquad
\Xi=1-\frac{a^2}{\ell^2},
\end{equation}
\begin{equation}
M=\frac{(r_+^2+a^2)(1+r_+^2/\ell^2)}{2G r_+\Xi^2},
\qquad J=aM,
\end{equation}
\begin{equation}
\Omega=\frac{a(1+r_+^2/\ell^2)}{r_+^2+a^2},\qquad
\Vth=\frac{2\pi(r_+^2+a^2)(2r_+^2\ell^2+a^2\ell^2-r_+^2a^2)}{3\ell^2\Xi^2 r_+}
\label{eq:app_kerr_thermo}
\end{equation}
\cite{Caldarelli:1999xj,Gibbons:2004ai,Cvetic:2010jb}.  Let $R$ be the Schwarzschild--AdS horizon radius with the same thermodynamic volume, $\Vth=4\pi R^3/3$.  Solving Eq.~\eqref{eq:app_kerr_thermo} perturbatively at fixed $\Vth$ gives
\begin{equation}
r_+(a)=R-\frac{R^2+\ell^2}{2R\ell^2}a^2
-\frac{(R^2+\ell^2)^2}{6R^3\ell^4}a^4+O(a^6).
\label{eq:app_fixedV_rplus}
\end{equation}
Substitution into the entropy yields
\begin{equation}
S(a)\big|_V=\frac{\pi R^2}{G}
-\frac{\pi}{12G}\frac{(R^2+\ell^2)^2}{R^2\ell^4}a^4+O(a^6).
\label{eq:app_fixedV_entropy}
\end{equation}
The absence of an $a^2$ term is also visible directly from the first law.  At $a=0$,
\begin{equation}
\left.\frac{\dd^2M}{\dd a^2}\right|_0
=\left.\frac{\dd\Omega}{\dd a}\right|_0
 \left.\frac{\dd J}{\dd a}\right|_0
=\frac{(R^2+\ell^2)^2}{2GR\ell^4},
\end{equation}
so $\dd^2S/\dd a^2|_0=0$.  Thus rotation is marginal at quadratic order at the Schwarzschild endpoint, while Eq.~\eqref{eq:app_fixedV_entropy} gives the first nonzero entropy decrease at fourth order.  Along the stationary branch the bulk canonical energy vanishes and
\begin{equation}
\mathcal B_{\infty,V}^{\rm stat}(\partial_a,\partial_a)
=-T\frac{\dd^2S}{\dd a^2}
=\frac{\pi T_0}{G}\frac{(R^2+\ell^2)^2}{R^2\ell^4}a^2+O(a^4)>0
\end{equation}
for sufficiently small nonzero $a$, where $T_0=(1+3R^2/\ell^2)/(4\pi R)$.

The finite stationary comparison can be checked exactly.  With
\begin{equation}
A=\frac{4\pi(r_+^2+a^2)}{\Xi},
\end{equation}
substitution of Eq.~\eqref{eq:app_kerr_thermo} gives
\begin{equation}
36\pi\Vth^2-A^3=
\frac{16\pi^3a^4\ell^4(r_+^2+a^2)^2(\ell^2+r_+^2)^2}
{r_+^2(\ell-a)^4(\ell+a)^4}\ge0,
\label{eq:app_kerr_exact}
\end{equation}
for $r_+>0$ and $|a|<\ell$, with equality only for $a=0$.  Hence every nonzero Kerr--AdS rotation has strictly smaller entropy than Schwarzschild--AdS at the same thermodynamic volume, and the endpoint-marginal Kerr direction does not extend to an equality branch.

For static RN--AdS,
\begin{equation}
S_{\rm RN-AdS}=\frac{\pi r_+^2}{G},\qquad
\Vth=\frac{4\pi r_+^3}{3}.
\end{equation}
Fixed $\Vth$ therefore fixes $r_+$ and gives
\begin{equation}
S_{\rm RN-AdS}(\Vth,Q)=S_{\rm Schw-AdS}(\Vth).
\label{eq:app_rn_equal}
\end{equation}
If charge variations are admitted, this is an exact static equality sector while in a fixed-charge $Q=0$ sector only Schwarzschild--AdS remains.

\section{Covariant origin of the stationary boundary Hessian}
\label{app:covariant_hessian}

For fields $\phi$, the Iyer--Wald symplectic current is
\begin{equation}
\bomega(\phi;\delta_1\phi,\delta_2\phi)
=\delta_1\btheta(\phi,\delta_2\phi)-\delta_2\btheta(\phi,\delta_1\phi),
\end{equation}
and the Hollands--Wald bulk canonical energy is
\begin{equation}
\Ebulk_{\chi,X}(v,v)=\int_\Sigma\bomega(\phi;v,\Lie_\chi v).
\label{eq:app_bulk_energy}
\end{equation}
The second variation of the Hamiltonian identity has the form
\begin{equation}
\delta^2H_\chi=\Ebulk_{\chi,X}(v,v)+B_\infty-B_{\cal B},
\label{eq:app_second_variation}
\end{equation}
where the horizon contribution contains the second variation of the Wald entropy and the asymptotic contribution contains the second variations of the conserved charges \cite{Wald:1993nt,Iyer:1994ys,Hollands:2012sf}.  At fixed $\Vth$, fixed pressure and fixed external couplings, the surviving asymptotic terms define the constrained charge Hessian $\mathcal B_{\infty,V}$, so the boundary-completed form is
\begin{equation}
\Etot_{\chi,V}=\Ebulk_\chi+\mathcal B_{\infty,V}.
\end{equation}
For a tangent to an exact stationary family, $\Lie_\chi v_{\rm stat}=0$, hence Eq.~\eqref{eq:app_bulk_energy} vanishes and Eq.~\eqref{eq:app_second_variation} reduces to
\begin{equation}
\mathcal B_{\infty,V}(v_{\rm stat},v_{\rm stat})
=-T\,\mathrm{Hess}_V(S_{\rm W})(v_{\rm stat},v_{\rm stat}).
\label{eq:app_stationary_hessian}
\end{equation}
This is the covariant origin of the local stationary Hessian used in the Letter.  Along an affine stationary fixed-volume family $q^A(s)=q_0^A+s\Delta q^A$, differentiating the extended first law once more gives
\begin{equation}
\begin{split}
T\frac{\dd^2S_{\rm W}}{\dd s^2}={}&
\frac{\dd^2M}{\dd s^2}-\Omega_i\frac{\dd^2J_i}{\dd s^2}-\Phi_\alpha\frac{\dd^2Q_\alpha}{\dd s^2}\\
&-\frac{\dd T}{\dd s}\frac{\dd S_{\rm W}}{\dd s}
-\frac{\dd\Omega_i}{\dd s}\frac{\dd J_i}{\dd s}
-\frac{\dd\Phi_\alpha}{\dd s}\frac{\dd Q_\alpha}{\dd s}.
\end{split}
\label{eq:app_affine_hessian}
\end{equation}
Thus the stationary boundary Hessian along the comparison is
\begin{equation}
\mathcal B_{\infty,V}^{\rm stat}(\Delta q,\Delta q)
=-T\frac{\dd^2S_{\rm W}}{\dd s^2},
\end{equation}
including away from an entropy-critical point, where the $\dd T/\dd s$ term in Eq.~\eqref{eq:app_affine_hessian} must be retained. This is the quantity whose sign enters the global affine comparison.

\section{Dynamical canonical-energy sector}
\label{app:dynamical_sector}

This appendix summarizes the dynamical canonical-energy sector in compact four-dimensional Einstein--AdS gravity. The reverse-isoperimetric argument naturally concerns tangent directions of exact stationary black-hole families at fixed volume. The discussion here addresses the separate problem of dynamical stability.

We work in four-dimensional Einstein gravity with negative cosmological constant at fixed pressure
\begin{equation}
    P=\frac{3}{8\pi G\ell^2},
\end{equation}
and with compact spherical conformal boundary, smooth bifurcate Killing horizons, and reflecting AdS boundary conditions.  The stationary examples considered below are Schwarzschild--AdS, the neutral Kerr--AdS rotational branch, and the static RN--AdS branch when charge variations are included.  The Kerr--AdS and RN--AdS thermodynamic variables used here are the standard extended-thermodynamic variables of asymptotically AdS black holes \cite{Caldarelli:1999xj,Gibbons:2004ai,Cvetic:2010jb}.

The restrictions should be kept explicit.  We do not claim a positivity theorem for arbitrary higher-derivative theories, arbitrary matter couplings, noncompact horizons, scalarized branches, disconnected topological sectors, or the full coupled Einstein--Maxwell perturbation problem around a generic charged rotating background. The stationary Kerr--AdS and RN--AdS checks are given in Appendix~\ref{app:stationary_checks}, while the covariant second-variation origin of the boundary Hessian is given in Appendix~\ref{app:covariant_hessian}.

Let \(X_0(V)\) denote the Schwarzschild--AdS solution with thermodynamic volume \(V\), and let
\begin{equation}
    \scrC_V=\{X: \Vth(X)=V\}
\end{equation}
denote the fixed-volume sector considered below. In Einstein gravity the Iyer--Wald entropy reduces to the Bekenstein--Hawking area entropy \cite{Wald:1993nt,Iyer:1994ys},
\begin{equation}
    \SW=\frac{A}{4G}.
\end{equation}
The boundary-completed canonical energy used in the Letter is
\begin{equation}
    \Etot_{\chi,V}(v,v)
    =
    \Ebulk_{\chi,X}(v,v)
    +
    \calB_{\infty,V}(v,v),
    \label{eq:SM_completed_energy_intro}
\end{equation}
where \(\Ebulk_{\chi,X}\) is the Hollands--Wald bulk canonical energy \cite{Hollands:2012sf}, \(\calB_{\infty,V}\) is the constrained asymptotic charge Hessian at fixed thermodynamic volume, and \(\chi\) is the horizon-generating Killing field.

The completed quadratic form acts on the full constrained linearized
perturbation space \(\scrT_{X,V}\).  The entropy Hessian is instead defined
on the stationary solution manifold.  For
\(v_{\rm stat}\in T_X\scrC_V\),
\begin{equation}
    \calB_{\infty,V}(v_{\rm stat},v_{\rm stat})
    =
    -T_X\,{\rm Hess}_{V}(\SW)(v_{\rm stat},v_{\rm stat}),
    \label{eq:SM_hessian_identity_intro}
\end{equation}
because \(\Lie_\chi v_{\rm stat}=0\) and hence the bulk canonical energy
vanishes.  This is the local fixed-volume entropy-Hessian identity used in
the stationary sector.

\subsection{Dynamical and stationary perturbation sectors}
\label{app:dyn_stat_split}

The completed quadratic form acts on the constrained
linearized perturbation space \(\scrT_{X,V}\) around a stationary black hole
\(X\).  The stationary tangent space \(T_X\scrC_V\) is only a subspace of
\(\scrT_{X,V}\). Such perturbations contain more information than the finite-dimensional thermodynamic state space.  We separate
\begin{equation}
    v=v_{\rm dyn}+v_{\rm stat}+v_{\rm gauge}.
    \label{eq:SM_dyn_stat_split}
\end{equation}
Here \(v_{\rm gauge}\) is pure diffeomorphism or gauge motion and is quotiented out.  The stationary part is tangent to the exact stationary black-hole family through \(X\):
\begin{equation}
    v_{\rm stat}
    =
    \left.\frac{\dd}{\dd\epsilon}X(q^A+\epsilon\,\delta q^A)\right|_{\epsilon=0},
    \qquad
    \delta\Vth=0,
    \label{eq:SM_vstat_def}
\end{equation}
where \(q^A\) denotes the integrable asymptotic charges and stationary parameters.  Equivalently, \(v_{\rm stat}\) is the stationary zero mode chosen to reproduce the first-order charge variations of \(v\).  The dynamical remainder is charge-free:
\begin{equation}
    \delta M[v_{\rm dyn}]
    =
    \delta J_i[v_{\rm dyn}]
    =
    \delta Q_\alpha[v_{\rm dyn}]
    =
    \delta\Vth[v_{\rm dyn}]
    =0,
    \label{eq:SM_vdyn_chargefree}
\end{equation}
after imposing the fixed external couplings and boundary conditions.

It is important to note that this split is not required in the RII proof, which restricts directly to $v_{\rm stat}\in T_X\scrC_V$.  It is included only to distinguish the stationary thermodynamic sector from the separate dynamical stability problem. The stationary part satisfies
\begin{equation}
    \Lie_\chi v_{\rm stat}=0,
    \qquad
    \Ebulk_{\chi,X}(v_{\rm stat},v_{\rm stat})=0,
\end{equation}
so it does not contribute to the bulk symplectic integral. Its sign is controlled by the constrained asymptotic charge Hessian,
\begin{equation}
\begin{split}
    \Etot_{\chi,V}(v_{\rm stat},v_{\rm stat})
    ={}&
    \calB_{\infty,V}(v_{\rm stat},v_{\rm stat})
    \\
    ={}&
    -T\,\mathrm{Hess}_{V}(\SW)(v_{\rm stat},v_{\rm stat}).
\end{split}
\end{equation}
The dynamical part is not parametrized by equilibrium variables such as \((S,J_i,Q_\alpha,\Omega_i,\Phi_\alpha)\).  It describes genuine linearized fields around the background, and its sign is controlled by the Hollands--Wald canonical energy \cite{Hollands:2012sf},
\begin{equation}
    \Ebulk_{\chi,X}(v_{\rm dyn},v_{\rm dyn})\ge0.
\end{equation}
\subsection{Bulk positivity for dynamical perturbations}
\label{app:bulk_positivity}

For completeness, and independently of the RII argument, we next consider perturbations with vanishing variations of the stationary charges.  These are the genuinely dynamical perturbations.  Around a static spherical Einstein--AdS black hole, gravitational perturbations decompose into gauge-invariant Regge--Wheeler--Zerilli master fields, equivalently into the Kodama--Ishibashi master variables \cite{Regge:1957td,Zerilli:1970se,Kodama:2003jz,Ishibashi:2003ap}.  The master variables
\begin{equation}
    \Psi_{\sigma\ell m}(t,r),
    \qquad
    \sigma\in\{\mathrm{axial},\mathrm{polar}\},
\end{equation}
obey equations of the form
\begin{equation}
    -\partial_t^2\Psi_\sigma
    +
    \partial_{r_*}^2\Psi_\sigma
    -
    V_\sigma(r)\Psi_\sigma=0,
    \qquad
    \frac{\dd r_*}{\dd r}=f^{-1}(r),
    \label{eq:SM_master_equation}
\end{equation}
with
\begin{equation}
    f(r)=1-\frac{2GM}{r}+\frac{r^2}{\ell^2}.
\end{equation}
The physical modes satisfy \(\ell\ge2\).  The \(\ell=0\) and \(\ell=1\) sectors correspond to stationary parameter shifts or pure-gauge modes.

For reflecting AdS boundary conditions and regularity at the future horizon, the standard master-field stability construction rewrites the bulk canonical energy as a positive sum after an \(S\)-deformation \cite{Kodama:2003jz,Ishibashi:2003ap,Hollands:2012sf}.  Define
\begin{equation}
    D_\sigma=\partial_{r_*}+S_\sigma(r),
\end{equation}
and
\begin{equation}
    \widetilde V_\sigma(r)
    =
    V_\sigma(r)
    +
    \frac{\dd S_\sigma}{\dd r_*}
    -
    S_\sigma^2(r).
    \label{eq:SM_s_deformed_potential}
\end{equation}
For the compact spherical Einstein--AdS sector, the Regge--Wheeler--Zerilli/Kodama--Ishibashi construction gives a regular deformation such that
\begin{equation}
    \widetilde V_\sigma(r)\ge0
\end{equation}
for physical gravitational modes.  The bulk canonical energy then has the schematic positive form
\begin{equation}
\begin{split}
    \Ebulk_{\chi,X}(v,v)
    ={}&
    \frac12
    \sum_{\sigma,\ell,m}
    \int_{r_+}^{\infty}\dd r_*
    \bigg[
        |\partial_t\Psi_{\sigma\ell m}|^2
        +
        |D_\sigma\Psi_{\sigma\ell m}|^2
    \\
    &\hspace{3.6cm}
        +
        \widetilde V_\sigma|\Psi_{\sigma\ell m}|^2
    \bigg],
    \label{eq:SM_bulk_positive}
\end{split}
\end{equation}
with no negative contribution from the AdS boundary under reflecting boundary conditions.  Hence
\begin{equation}
    \Ebulk_{\chi,X}(v,v)\ge0.
    \label{eq:SM_bulk_nonnegative}
\end{equation}
Equality in Eq.~\eqref{eq:SM_bulk_positive} requires
\begin{equation}
    \partial_t\Psi_{\sigma\ell m}=0,
    \qquad
    D_\sigma\Psi_{\sigma\ell m}=0,
    \qquad
    \widetilde V_\sigma^{1/2}\Psi_{\sigma\ell m}=0.
    \label{eq:SM_zero_bulk_conditions}
\end{equation}
Together with regularity at the horizon and normalizable reflecting AdS falloff, the only physical solutions of Eq.~\eqref{eq:SM_zero_bulk_conditions} are trivial master fields. Thus a zero-bulk-energy dynamical perturbation is pure gauge, up to stationary zero modes associated with changes of mass, angular momentum, or charge.  At fixed \(\Vth\) and \(P\), and within a fixed external charge ensemble, the mass is not an independent variation. The remaining angular-momentum and charge variations correspond to stationary boundary directions.

\vspace{5em}
\bibliographystyle{utphys1}
\bibliography{bib}

@article{Kastor:2009wy,
    author = "Kastor, David and Ray, Sourya and Traschen, Jennie",
    title = "{Enthalpy and the Mechanics of AdS Black Holes}",
    eprint = "0904.2765",
    archivePrefix = "arXiv",
    primaryClass = "hep-th",
    doi = "10.1088/0264-9381/26/19/195011",
    journal = "Class. Quant. Grav.",
    volume = "26",
    pages = "195011",
    year = "2009"
}

@article{Dolan:2010ha,
    author = "Dolan, Brian P.",
    title = "{The cosmological constant and the black hole equation of state}",
    eprint = "1008.5023",
    archivePrefix = "arXiv",
    primaryClass = "gr-qc",
    reportNumber = "DIAS-STP-10-10",
    doi = "10.1088/0264-9381/28/12/125020",
    journal = "Class. Quant. Grav.",
    volume = "28",
    pages = "125020",
    year = "2011"
}

@article{Cvetic:2010jb,
    author = "Cvetic, M. and Gibbons, G. W. and Kubiznak, D. and Pope, C. N.",
    title = "{Black Hole Enthalpy and an Entropy Inequality for the Thermodynamic Volume}",
    eprint = "1012.2888",
    archivePrefix = "arXiv",
    primaryClass = "hep-th",
    reportNumber = "DAMTP-2010-125, MIFPA-10-56",
    doi = "10.1103/PhysRevD.84.024037",
    journal = "Phys. Rev. D",
    volume = "84",
    pages = "024037",
    year = "2011"
}

@article{Wald:1993nt,
    author = "Wald, Robert M.",
    title = "{Black hole entropy is the Noether charge}",
    eprint = "gr-qc/9307038",
    archivePrefix = "arXiv",
    reportNumber = "EFI-93-42",
    doi = "10.1103/PhysRevD.48.R3427",
    journal = "Phys. Rev. D",
    volume = "48",
    number = "8",
    pages = "R3427--R3431",
    year = "1993"
}

@article{Iyer:1994ys,
    author = "Iyer, Vivek and Wald, Robert M.",
    title = "{Some properties of Noether charge and a proposal for dynamical black hole entropy}",
    eprint = "gr-qc/9403028",
    archivePrefix = "arXiv",
    doi = "10.1103/PhysRevD.50.846",
    journal = "Phys. Rev. D",
    volume = "50",
    pages = "846--864",
    year = "1994"
}

@article{Kumar:2025mvj,
    author = "Kumar, Naman",
    title = "{A proof of the reverse isoperimetric inequality using a geometric-analytic approach}",
    eprint = "2508.13235",
    archivePrefix = "arXiv",
    primaryClass = "gr-qc",
    month = "8",
    year = "2025"
}

@article{Xiao:2023lap,
    author = "Xiao, Yong and Tian, Yu and Liu, Yu-Xiao",
    title = "{Extended Black Hole Thermodynamics from Extended Iyer-Wald Formalism}",
    eprint = "2308.12630",
    archivePrefix = "arXiv",
    primaryClass = "gr-qc",
    doi = "10.1103/PhysRevLett.132.021401",
    journal = "Phys. Rev. Lett.",
    volume = "132",
    number = "2",
    pages = "021401",
    year = "2024"
}

@article{Hollands:2012sf,
    author = "Hollands, Stefan and Wald, Robert M.",
    title = "{Stability of Black Holes and Black Branes}",
    eprint = "1201.0463",
    archivePrefix = "arXiv",
    primaryClass = "gr-qc",
    doi = "10.1007/s00220-012-1638-1",
    journal = "Commun. Math. Phys.",
    volume = "321",
    pages = "629--680",
    year = "2013"
}

@article{Prabhu:2015rua,
    author = "Prabhu, Kartik and Wald, Robert M.",
    title = "{Black Hole Instabilities and Exponential Growth}",
    eprint = "1501.02522",
    archivePrefix = "arXiv",
    primaryClass = "gr-qc",
    doi = "10.1007/s00220-015-2446-1",
    journal = "Commun. Math. Phys.",
    volume = "340",
    number = "1",
    pages = "253--290",
    year = "2015"
}

@article{Kodama:2003jz,
    author = "Kodama, Hideo and Ishibashi, Akihiro",
    title = "{A Master equation for gravitational perturbations of maximally symmetric black holes in higher dimensions}",
    eprint = "hep-th/0305147",
    archivePrefix = "arXiv",
    doi = "10.1143/PTP.110.701",
    journal = "Prog. Theor. Phys.",
    volume = "110",
    pages = "701--722",
    year = "2003"
}

@article{Ishibashi:2003ap,
    author = "Ishibashi, Akihiro and Kodama, Hideo",
    title = "{Stability of higher dimensional Schwarzschild black holes}",
    eprint = "hep-th/0305185",
    archivePrefix = "arXiv",
    doi = "10.1143/PTP.110.901",
    journal = "Prog. Theor. Phys.",
    volume = "110",
    pages = "901--919",
    year = "2003"
}

@article{Feng:2017jub,
    author = "Feng, Xing-Hui and Liu, Hai-Shan and Lu, Wen-Tian and Lu, H.",
    title = "{Horndeski Gravity and the Violation of Reverse Isoperimetric Inequality}",
    eprint = "1705.08970",
    archivePrefix = "arXiv",
    primaryClass = "hep-th",
    doi = "10.1140/epjc/s10052-017-5356-x",
    journal = "Eur. Phys. J. C",
    volume = "77",
    number = "11",
    pages = "790",
    year = "2017"
}

@article{Hennigar:2014cfa,
    author = "Hennigar, Robie A. and Kubiz{\v{n}}{\'a}k, David and Mann, Robert B.",
    title = "{Entropy Inequality Violations from Ultraspinning Black Holes}",
    eprint = "1411.4309",
    archivePrefix = "arXiv",
    primaryClass = "hep-th",
    doi = "10.1103/PhysRevLett.115.031101",
    journal = "Phys. Rev. Lett.",
    volume = "115",
    number = "3",
    pages = "031101",
    year = "2015"
}

@article{Hennigar:2015cja,
    author = "Hennigar, Robie A. and Kubiz{\v{n}}{\'a}k, David and Mann, Robert B. and Musoke, Nathan",
    title = "{Ultraspinning limits and super-entropic black holes}",
    eprint = "1504.07529",
    archivePrefix = "arXiv",
    primaryClass = "hep-th",
    doi = "10.1007/JHEP06(2015)096",
    journal = "JHEP",
    volume = "06",
    pages = "096",
    year = "2015"
}

@article{Caldarelli:1999xj,
    author = "Caldarelli, Marco M. and Cognola, Guido and Klemm, Dietmar",
    title = "{Thermodynamics of Kerr-Newman-AdS black holes and conformal field theories}",
    eprint = "hep-th/9908022",
    archivePrefix = "arXiv",
    reportNumber = "UTF-434",
    doi = "10.1088/0264-9381/17/2/310",
    journal = "Class. Quant. Grav.",
    volume = "17",
    pages = "399--420",
    year = "2000"
}

@article{Gibbons:2004ai,
    author = "Gibbons, G. W. and Perry, M. J. and Pope, C. N.",
    title = "{The First law of thermodynamics for Kerr-anti-de Sitter black holes}",
    eprint = "hep-th/0408217",
    archivePrefix = "arXiv",
    reportNumber = "DAMTP-2004-87, MIFP-04-17",
    doi = "10.1088/0264-9381/22/9/002",
    journal = "Class. Quant. Grav.",
    volume = "22",
    pages = "1503--1526",
    year = "2005"
}

@article{Regge:1957td,
    author = "Regge, Tullio and Wheeler, John A.",
    title = "{Stability of a Schwarzschild singularity}",
    doi = "10.1103/PhysRev.108.1063",
    journal = "Phys. Rev.",
    volume = "108",
    pages = "1063--1069",
    year = "1957"
}

@article{Zerilli:1970se,
    author = "Zerilli, Frank J.",
    title = "{Effective potential for even parity Regge-Wheeler gravitational perturbation equations}",
    doi = "10.1103/PhysRevLett.24.737",
    journal = "Phys. Rev. Lett.",
    volume = "24",
    pages = "737--738",
    year = "1970"
}

\end{document}